
\documentclass[11pt]{llncs}

\usepackage{fullpage}
\usepackage{times}
\usepackage{amssymb}
\usepackage{amsmath}
\usepackage{color}
\usepackage{xspace}
\usepackage{cite}
\usepackage{subfigure}
\usepackage{epsfig}
\usepackage{psfrag}       
\usepackage[ruled,vlined,linesnumbered]{algorithm2e}
\usepackage{url}
\usepackage{listings}
\usepackage{ulem} \normalem 

\newcommand{\parhead}[1]{{\textbf{#1.}\xspace}}
\newcommand{\DN}{{Dynamic Network}\xspace}
\newcommand{\DNs}{{Dynamic Networks}\xspace}
\newcommand{\ADN}{{Anonymous Dynamic Network}\xspace}
\newcommand{\ADNs}{{Anonymous Dynamic Networks}\xspace}
%



\sloppy
\begin{document}

\title{A Faster Counting Protocol for \ADNs}

\titlerunning{A Faster Counting Protocol for \ADNs} 

\author{
   Alessia Milani\inst{1}
   \and
   Miguel~A.~Mosteiro\inst{2}
}
\authorrunning{A. Milani and M. A. Mosteiro}
\institute{
LABRI, University of Bordeaux, INP, France\\
\email{milani@labri.fr}
\and
Department of Computer Science, Kean University, USA\\ 
\email{mmosteir@kean.edu}
}

\maketitle              

\begin{abstract}
We study the problem of counting the number of nodes in a slotted-time communication network, under the challenging assumption that nodes do not have identifiers and the network topology changes frequently. That is, for each time slot links among nodes can change arbitrarily provided that the network is always connected.

This network model has been motivated by the ongoing development of new communication technologies that enable the deployment of a massive number of devices with highly dynamic connectivity patterns. Tolerating dynamic topologies is clearly crucial in face of mobility and unreliable communication. Current communication networks do have node identifiers though. Nevertheless, even if identifiers may be still available in future massive networks, it might be convenient to ignore them 
if neighboring nodes change all the time.  Consequently, knowing what is the cost of anonymity is of paramount importance to understand what is feasible or not for future generations of \DNs. 

Counting is a fundamental task in distributed computing since knowing the size of the system often facilitates the desing of solutions for more complex problems. Also, the size of the system is usually used to decide termination in distributed algorithms. 
Currently, the best upper bound proved on the running time to compute the exact network size is double-exponential. 
However, only linear complexity lower bounds are known, leaving open the question of whether efficient Counting protocols for \ADNs exist or not.

In this paper we make a significant step towards answering this question by presenting a distributed Counting protocol for \ADNs which has exponential time complexity. Our algorithm ensures that eventually every node knows the exact size of the system and stops executing the algorithm. Previous Counting protocols have 
either double-exponential time complexity, 
or they are exponential but do not terminate, 
or terminate but do not provide running-time guarantees, 
or guarantee only an exponential upper bound on the network size. 
Other protocols are heuristic and do not guarantee the correct count. 
\end{abstract}



\section{Introduction}
\label{sec:intro}

We study the problem of \emph{Counting} the number of nodes in a communication network, under the challenging assumption that nodes do not have identifiers (IDs) and the network topology changes frequently. 
We consider broadcast networks in slotted-time scenarios. That is, in any given time slot, a message sent by a given node is received by all nodes directly connected to it (\emph{one-hop neighbors}). 
Worst-case topology changes are modeled assuming the presence of an adversary that, for each time slot, chooses the set of links among nodes. The choice is arbitrary as long as, in each time slot, the network is connected. 
This dynamic topology model, called \emph{1-interval connectivity}, was introduced in~\cite{KuhnLO2010} for \DNs where each node has a unique identifier.

The network model described, called \emph{\ADN}, has attracted a lot of attention recently~\cite{spirakis,conscious,oracle,experimentalConscious}. The model has been motivated by the ongoing development of new communication technologies that enable the deployment of a massive number of devices with highly dynamic connectivity patterns. Tolerating dynamic topologies is clearly crucial in face of mobility and unreliable communication. 
Current communication networks do have node IDs (or otherwise a labeling is defined at startup). Nevertheless, in future massive networks, it might be suitable to avoid nodes IDs to facilitate mass production. Or, even if IDs are still available it might be convenient to ignore them under highly dynamic conditions where neighboring nodes change all the time.
Consequently, knowing what is the cost of anonymity is of paramount importance to understand what is feasible or not for future generations of \DNs. 

Counting is a fundamental distributed computing problem since knowing the size of the system facilitates the solution of more complex problems. Also this parameter is usually used to ensure the termination of the algorithm.

Counting can be solved in \ADNs, but the best known upper bound on the time complexity 
is double-exponential~\cite{conscious}. A double-exponential running time precludes the application of such algorithm to networks of significant size,
but only linear lower bounds are known.
Such a large gap leaves open the question of whether practical protocols exist or not.

The protocol presented in this paper makes a significant step towards answering 
the latter question,
reducing the time complexity for \emph{exact} Counting to exponential. 
Our algorithm ensures that there is a time slot when all nodes know the \emph{exact size} of the system and they stop executing the algorithm. All nodes stop at the same round and this is known by every node. Thus it is easy to concatenate another algorithm which uses the system size.

Previous Counting protocols for \ADNs have either double-exponential time complexity~\cite{conscious}, 
or they are exponential but do not terminate~\cite{conscious}, 
or terminate but do not provide running-time guarantees~\cite{oracle},
or guarantee only an exponential upper bound on the network size~\cite{spirakis}. 
Other protocols are heuristic and do not guarantee the correct count~\cite{experimentalConscious}.

All current Counting protocols for \ADNs~\cite{oracle,spirakis,conscious,experimentalConscious} assume the presence of one distinguished node, usually called \emph{leader}, and additionally use some knowledge of the number of neighbors of each node, called \emph{degree}. In our model, we include both assumptions. Namely, the presence of a leader node, and an upper bound on the maximum degree of the adversarial topology which is known by all nodes.
While these assumptions may seem too strong it was proved in~\cite{spirakis} that Counting is not solvable in Anonymous Networks without the presence of a \emph{leader}, even if the topology does not change. In the same work, it was conjectured that any non-trivial computation is impossible without knowledge of some network characteristics.


Our algorithm is inspired by the algorithm presented by Di Luna et al. in~\cite{conscious}, which starts computing an upper bound on the network size using the algorithm presented in~\cite{spirakis}.
Then, it verifies each candidate size down to the correct size. To verify each candidate size, an energy-transfer approach is used. Namely, each non-leader node is initially assigned a unit of energy which is shared evenly with neighbors in each communication round, except for the leader that works as a sink. This energy-transfer protocol is a backwards version of \emph{mass-distribution} and \emph{gossip-based} algorithms~\cite{KDGgossip,flowupdate,FMT:aggJournal} used to compute the size in other network models. The unit mass initially held in only one node in the latter system is shared throughout the network, converging to the average which is the inverse of the size. The energy-transfer protocol is shown to be at most exponential in the candidate size which in turn is exponential in the worst case, yielding a double-exponential Counting protocol.

The protocol presented here leverages the above idea of verifying candidate sizes using an energy-transfer protocol, but rather than starting with an upper bound, it follows a bottom-up approach. That is, it verifies $2,3,\dots,$ etc. up to the actual size. A carefully chosen energy threshold to decide when the count is accurate yields an exponential speedup in the worst-case running-time guarantees. 
The running time proved also identifies the collection of energy at the leader as the speedup bottleneck for gossip-based Counting, given that all other factors in the time complexity obtained are polynomial. In contrast, in the running time of other exact Counting protocols that terminate, all factors are exponential or double exponential~\cite{conscious}, or the running time is not proved~\cite{oracle}.

\subsection*{Contributions}
\label{sec:results}
In the following we summarize the main contributions of our work.
\begin{itemize}
\item Following-up on the Conscious Counting protocol of~\cite{conscious}, we present an improved Counting protocol for \ADNs that computes the exact number of nodes in less than $(2\Delta)^{n+1}(n+1)\ln(n+1)/\ln(2\Delta)$ communication rounds, where $n$ is the number of nodes and $\Delta$ is any upper bound on the maximum number of neighbors that any node will ever have. Our algorithm tolerates worst-case changes of topology, limited to 1-interval connectivity. The protocol requires the presence of one leader node and knowledge of $\Delta$. 

\item The running time of our protocol entails an exponential speedup over the previous best Counting algorithm in~\cite{conscious}, which was proved to run in $O(e^{(\Delta^{2n})}\Delta^{3n})$ communication rounds, which is double-exponential. 
The speedup attained is mainly due to a carefully chosen energy threshold used to verify candidate sizes that are not bigger than the actual size.  
Our analysis shows the correctness of such verification.

\item The time complexity proved identifies the phase where the leader collects energy from all other nodes as the speedup bottleneck for Counting with gossip-based protocols. Indeed, the exponential cost is due to this collection, whereas all other terms in the time complexity are polynomial. In contrast, in the running time of~\cite{conscious} all terms are exponential or double exponential.

\end{itemize}

\subsection*{Roadmap}
The rest of the paper is organized as follows. In Section~\ref{sec:related} we briefly overview previous work directly related to this paper. 
After formally defining the model and the problem in Section~\ref{sec:model}, we present our Counting protocol in Section~\ref{sec:protocols} and its analysis in Section~\ref{sec:analysis}. 


\section{Related Work}
\label{sec:related}

The following is an overview of previous work on Counting in \ADNs directly related to this paper. Other related work may be found in a survey on Dynamic Networks and Time-varying Graphs by Casteigts et al.~\cite{arnaudSurvey}, and in the papers cited below. 

Worst-case topology changes in \DNs may be limited assuming that the network is always connected (cf.~\cite{spirakis,KuhnLO2010,OW05, conscious}), or sometimes disconnected but for some limited time (cf.~\cite{geocast,PPC:opportunistic,F:delaytolerant,michail2014causality}).
The \emph{$T$-interval} connectivity model was introduced in~\cite{KuhnLO2010}. 
For $T\geq 1$, a network is said to be $T$-interval connected if for every $T$ consecutive rounds the network topology contains a stable connected subgraph spanning all nodes.
In the same paper, a Counting protocol was presented, but it requires each node to have a unique identifier. In~\cite{KuhnLO2010} it is also proved that, if no restriction on the size of the messages is required, the counting problem can be easily solved in $O(n)$ time when nodes have IDs. In our work, we focus on \emph{Anonymous} \DNs. Understanding if a linear counting algorithm exists also when IDs are not available will help to understand the difficulty introduced by anonymity (if any).

A Counting protocol for \ADNs where an upper bound $\Delta$ on the maximum degree is known was presented in~\cite{spirakis}. The adversarial topology is limited only to 1-interval connectivity, but the algorithm obtains only an upper bound on the size of the network $n$, which in the worst case is exponential, namely $O(\Delta^n)$. 
In our work, we aim to obtain an exact count, rather than only an upper bound.

The \emph{Conscious Counting} algorithm presented later in~\cite{conscious} does obtain the exact count for the same network model, but requires knowledge of an initial upper bound $K$ on the size of the network. Conscious Counting would be exponential if such upper bound were tight, since it runs in $O(e^{K^2}K^3)$ communication rounds. However, $K$ is obtained using the algorithm in~\cite{spirakis} mentioned above. Consequently,  in the worst case the overall running time of the Conscious Counting Algorithm is $O(e^{(\Delta^{2n})}\Delta^{3n})$, which is double-exponential. In our work, we obtain the exact count in exponential time. That is, we reduce exponentially the best known upper bound for exact Counting.

\ADNs where an upper bound on the maximum degree is not known where also studied~\cite{conscious,oracle,experimentalConscious}. In~\cite{conscious}, the protocol does not have a termination condition. That is, nodes running the protocol do not know whether the correct count has been reached or not. Hence, they have to continue running the protocol forever. In a companion paper~\cite{experimentalConscious}, the authors stop the protocol heuristically. Hence, the count obtained is not guaranteed to be correct. Indeed, errors appear when the conductance of the underlying connectivity graph is low.
In our work, we aim for Counting algorithms that terminate returning always the correct count. 
The protocol in~\cite{oracle} is shown to eventually terminate, although the running time is not proved. In their model, it is assumed that each node is equipped with an oracle that provides an estimation of its degree at each round. This is still an assumption of knowledge of network characteristics, although local.
This and the above shortcomings are not unexpected in light of the conjecture in~\cite{spirakis}, which states that
Counting (actually, any non-trivial computations) in \ADNs
without knowledge of some network characteristics is impossible.
Nevertheless, a proof of such conjecture has not been found yet.

Known lower bounds for Counting in \ADNs include only the trivial $\Omega(D)$, where $D$ is the \emph{dynamic} diameter of the network, and $\Omega(\log n)$~\footnote{Throughout the paper, $\log$ means logarithm base $2$, unless otherwise stated.} even if $D$ is constant, proved in~\cite{baldoni}.


\section{Preliminaries}
\label{sec:model}

\subsection{The Counting Problem} An algorithm is said to solve the \emph{Counting} problem if whenever it is executed in a \DN comprising $n$ nodes, all nodes eventually terminate and output $n$.

\subsection{The \ADN Model}
We consider a synchronous \DN composed of a fixed set of nodes $V$ where $|V|=n$. Nodes have no identifiers (IDs) or labels. We also assume the presence of a special node called the \emph{leader} and denoted $\ell$.  

Nodes communicate by broadcast. In particular, communication proceeds in synchronous \emph{rounds}. At each round a node broadcasts a message to its neighbors and simultaneously receives the messages broadcast in the same round by its neighbors (if any), then it makes some local computation. The time of computation is negligible. Thus, we compute the time complexity in rounds of communication.

At each round the set of communication links changes adversarially. Thus, the network is modeled as a dynamic graph $G=(V,E)$ where $E : \mathbb{N} \rightarrow \{(u,v) s.t. (u,v) \in V\}$ is a function mapping a round number $r$ to a set of undirected edges $E(r)$.  In particular, we consider the following $1$-interval connectivity model proposed by Kuhn et al. in \cite{KuhnLO2010}.

\begin{definition} A dynamic graph $G=(V,E)$ is 1-interval connected if for all $r\in\mathbb{N}$, the static graph $G_{r}:=(V,E(r))$ is connected.
\end{definition}
 
Finally, we assume that the size of the neighborhood of a node is upper bounded by a number $\Delta>0$ at every round, and we assume that $\Delta$ is known by the nodes.

At a first glance, some knowledge of the degree seems unnecessary because, after one message from each neighbor has been received in a given round, the degree is simply the message count. However, for the next round of communication, the degree may change due to changing topology. Thus, a node does not know its current degree before sending messages to its neighbors.

%
%
%
%
%
%
%
%
\section{Distributed Counting Algorithm}

\label{sec:protocols}
The algorithm consists of a sequence of iterations.
In each iteration, a candidate size is checked to decide if it is correct. 
If not, 
the candidate size is increased and a new iteration starts. 
In the following, we provide a high level explanation of the algorithm executed in each iteration.
 
At the beginning of each iteration every node is assigned energy value $1$, except for the leader which has $0$ energy. Then, the iteration proceeds in three consecutive phases described below. Each phase lasts a fixed amount of rounds which only depends on the current estimation of the system size. This is intended to synchronize the computation at all the nodes in the system without extra communication.

During the first phase, called the \emph{Collection Phase}, each node discharges itself by sending at each round a fraction at most half of its current energy to its neighbors. Then it computes its new energy by taking into account the energy given to its neighbors and the 
energy received from them.
The leader acts as a sink collecting energy but not disseminating it. This phase completes when the leader has received an amount of energy such that,
if the candidate size for the current iteration is the correct system size $n$, 
there is no node in the system with more than $1/k^c$ residual energy, for some constant $c>1$. The function $\tau(k)$ in Algorithms~\ref{algo:leader} and~\ref{algo:no-leader} gives the number of iterations of the Collection Phase needed to guarantee this. An exponential upper bound on $\tau(k)$ is computed in Corollary~\ref{cor:exp}. However, the bound may not be tight, so $\tau(k)$ is left as a parameter in the protocol. Should a better bound on $\tau(k)$ be proved, the protocol can be used as is.

Then, the \emph{Verification Phase} starts. During this phase, the energy at each node does not change and the leader verifies the correctness of 
the current candidate size
looking for a node with residual energy greater than $1/k^c$.  To this aim at each round of the Verification phase each non leader node broadcasts the maximal energy it has ``heard'' during this phase. At the beginning each such node broadcasts its own residual energy. This phase lasts sufficiently long to ensure that if a node with residual energy greater than $1/k^c$ exists, then the leader will hear from it. If the leader does not hear from 
such node, it knows that 
the candidate size was indeed correct, and the verification phase completes successfully.

The last phase, called \emph{Notification Phase}, is used by the leader 
when the verification phase completes successfully.
To notify such event, the leader 
broadcasts a special $\langle Halt\rangle$ message, 
and each node in turn broadcasts it as soon as 
it is received
and as long as the Notification Phase is not completed. 
If the Verification Phase completes unsuccessfully, the leader and every other node simply wait for the same number of rounds of communication
without taking any action, and then all the nodes start a new iteration. This procedure ensures synchronism.
A node stops executing the algorithm at the end of the Notification phase if it has received the $\langle Halt\rangle$ message. At this time every node knows the exact size of the system.

The protocol for the leader and non-leader nodes is detailed in Algorithms~\ref{algo:leader} and~\ref{algo:no-leader}.

\subsection*{PseudoCode}

\parhead{Variables at the leader node}
\begin{itemize}
\item $e_{\ell}$ is the energy of the leader at the current round. It is initialized to $0$ at the beginning of each iteration.
\item $k$ is the estimation of the system size. Initially equal to 1 and increased by one in each iteration.
\item $1/k^c$ is a threshold value for the energy such that, for a given estimate $k$, if $k$ is the correct size of the system, after the Collection Phase no node has energy greater than $1/k^c$ for some constant $c>1$.  
\item $IsCorrect$, initially $true$ is set to $false$ if the leader discovers that its estimate $k$ is wrong. This happens if the value of $e_{\ell}>k-1$ at the end of the Collection phase or if during the Verification phase the leader discovers a node with energy greater than $1/k^c$.
\item $halt$, initially $false$ is set to $true$ when the leader verifies that $k$ is the correct size of the system.
\end{itemize}

\begin{algorithm}[t]
\DontPrintSemicolon
$k\leftarrow 1$\;
$halt\leftarrow false$\;
\While{$\neg halt$}{
$k\leftarrow k+1$\;
$IsCorrect \leftarrow true$\;
 $e_{\ell}\leftarrow 0$\label{initzero}\;
\tcp{Collection Phase}
\For {each of $\tau(k)$ communication rounds }
 {\textsf{receive} $e_1,e_2,\ldots e_s$ \textsf{from neighbors, where $1\leq s\leq \Delta$}\;
   $e_{\ell}\leftarrow e_{\ell}+e_1+e_2+\ldots + e_s$\label{transferleader}\;
  }
\tcp{Verification Phase}
\For {each of $1+\left\lceil\frac{k}{1-1/k^c}\right\rceil$ communication rounds } { 
\textsf{receive} $e_1,e_2,\ldots e_s$ \textsf{from neighbors, where $1\leq s\leq \Delta$}\;
\eIf{$k-1-1/k^c\leq e_{\ell}\leq k-1$} 
{
\For {$j:= 1\ldots s$} {\If{$e_{j}>1/k^c$\label{detect}}{$IsCorrect \leftarrow false$\label{false1}\;}}
}
{$IsCorrect \leftarrow false$\label{false2}\;}
}
\tcp{Notification Phase}
\For {each of $k$ communication rounds } { 
\eIf{$IsCorrect$} {\textsf{broadcast} $\langle Halt\rangle$\;
$halt \leftarrow true$\;}
{\textsf{do nothing}\;}
}
}
output $k$\;
\caption{Algorithm of the leader node.}
\label{algo:leader} 
\end{algorithm}

\parhead{Variables at non leader nodes}
\begin{itemize}
\item $e$ is the energy of the node at the current round. It is initialized to $1$ at the beginning of each iteration.
\item $k$ is the estimation of the system size. Initially equal to 1 and increased by one in each iteration.
\item $e_{max}$, is the maximum energy the node is aware of at the current round of the Verification Phase.
\item $halt$, initially $false$, is set to $true$ when the node receives a $\langle Halt\rangle$ message.
\end{itemize}

\begin{algorithm}[t]
\SetKwFor{receive}{receive}{}{}
\DontPrintSemicolon
$k\leftarrow 1$\;
$halt\leftarrow false$\;
\While{$\neg halt$}{
$k\leftarrow k+1$\;
$e \leftarrow 1$\label{initone}\;
\tcp{Collection Phase}
\For {each of $\tau(k)$ communication rounds }
 {\textsf{broadcast} $\langle \frac{e}{2\Delta}\rangle $
  \textsf{and receive} $e_1,e_2,\ldots e_s$ \textsf{from neighbors, where $1\leq s\leq \Delta$}\;
   $e\leftarrow e\cdot (1-\frac{1}{2\Delta})+\sum^s_{j=1} e_j$\label{transfernonleader}\;
  }
\tcp{Verification Phase}
$e_{max}\leftarrow e$\label{aware}\;
\For {each of $\left\lceil1+\frac{k}{1-1/k^c}\right\rceil$ communication rounds } {
\textsf{broadcast} $\langle e_{max}\rangle $
\textsf{and receive} $e_1,e_2,\ldots e_s$ \textsf{from neighbors, where $1\leq s\leq \Delta$}\;
\For {$j:= 1\ldots s$}{\If{$e_{j}>e_{max}$}{$e_{max} \leftarrow e_{j}$\;}
}
}
\tcp{Notification Phase}
\For {each of $k$ communication rounds } { \If{halt} {\textsf{broadcast} $\langle Halt\rangle$\;}
\If{\textsf{\upshape receive} $\langle Halt\rangle$ \textsf{\upshape from some neighbor}\;}{$halt\leftarrow true$\;}
}
}
output $k$\;
\caption{Algorithm of non-leader nodes.}
\label{algo:no-leader} 
\end{algorithm}

\newpage

\section{Analysis}
\label{sec:analysis}

The following notation will be used. The energy of node $i$ at the beginning of round $r$, is denoted as $e_i^r$, which is also generalized to any set of nodes $S\subseteq V$ as $e_S^r=\sum_{i\in S}e_i^r$. 
For any given round $r$ and node $i$, 
let the set of neighbors of $i$ be $N_i^r$ and the average energy of $i$'s neighbors be $\overline{e}_{N_i^r}$. 
The superindex indicating the round number will be omitted when clear from context or irrelevant.
Also, at any time, let $\sum_{i\in V} e_i$ be called the \emph{system energy} and $\sum_{i\in V\setminus\{\ell\}} e_i$ be called the \emph{energy left}. 
At the beginning of each iteration of the protocol, that is, for each new size estimate $k$, the energy of the leader is reset to zero and the energy of the non-leader nodes is reset to $1$. Thus, the system energy is $\sum_{i\in V} e_i=n-1$ and the energy left is $\sum_{i\in V\setminus\{\ell\}} e_i=n-1$.

\begin{lemma}
\label{lemma:nodeenergy}
For any network of $n$ nodes, including a leader $\ell$, running the Counting Protocol under the communication and connectivity models defined the following holds.
For any given node $i\in V\setminus\{\ell\}$ and for any given round $r$ of the Collection Phase, it is $e_i^r\leq 1$.
\end{lemma}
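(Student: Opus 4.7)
The plan is to prove the lemma by induction on the round index $r$ within the Collection Phase of a single iteration (i.e., for a fixed size estimate $k$). The invariant is exactly the statement of the lemma: $e_i^r \le 1$ for every $i \in V\setminus\{\ell\}$ at every round $r$ of the Collection Phase. The base case is immediate, since at the start of each iteration every non-leader energy is reset to $1$ by line~\ref{initone} of Algorithm~\ref{algo:no-leader}, so $e_i^1 = 1 \le 1$.

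For the inductive step, fix a non-leader $i$, let $N_i^r$ be its neighborhood at round $r$, and put $s := |N_i^r| \le \Delta$. Two features of the protocol drive the argument. First, the leader never broadcasts during the Collection Phase (Algorithm~\ref{algo:leader} only reads from its neighbors), so even if $\ell \in N_i^r$ the contribution of $\ell$ to the inflow at $i$ is $0$; hence $i$'s total inflow is $\sum_{j \in N_i^r\setminus\{\ell\}} e_j^r/(2\Delta)$, a sum of at most $s' \le s$ terms, each bounded by $1/(2\Delta)$ by the inductive hypothesis. Second, since $i$ simultaneously delivers $e_i^r/(2\Delta)$ to each of its $s$ neighbors, the aggregate outflow charged against $e_i^r$ is $s\cdot e_i^r/(2\Delta)$, in agreement with the description that a node sends ``a fraction at most half of its current energy''. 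Using the inductive hypothesis $e_i^r\le 1$ in the self-decay term as well, the update line~\ref{transfernonleader} then gives
\[
e_i^{r+1} \;\le\; e_i^r\!\left(1-\tfrac{s}{2\Delta}\right) + \tfrac{s'}{2\Delta} \;\le\; \left(1-\tfrac{s}{2\Delta}\right) + \tfrac{s}{2\Delta} \;=\; 1,
\]
closing the induction.

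The main obstacle is the correct bookkeeping of the broadcast. Because each of $i$'s $s$ neighbors receives a copy of the broadcast value $e_i^r/(2\Delta)$, one must be careful to aggregate these $s$ outgoing copies into the self-discharge so that the $(1-s/(2\Delta))$ coefficient exactly cancels a worst-case inflow of $s/(2\Delta)$; otherwise a node with several non-leader neighbors could break the invariant after a single round. Once this accounting is in place, and the observation that the leader never feeds energy back into a non-leader is used to bound the inflow by $s'/(2\Delta)$ with $s'\le s$, the conclusion follows with no further calculation.
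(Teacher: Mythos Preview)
Your proof is correct and follows essentially the same inductive strategy as the paper: both argue round by round that no non-leader energy can exceed $1$, using that the update is a convex combination of values each bounded by $1$. Your version is slightly more streamlined---you bound the inflow termwise and avoid the paper's case split on whether $\overline{e}_{N_i^r}\lessgtr e_i^r$, and you are more explicit about the leader contributing nothing to the inflow---but the underlying idea is the same.
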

\begin{proof}
Fix some arbitrary (non-leader) node $i$. 
Consider the transition between round $r$ and $r+1$. 
We have that
$$e_i^{r+1} \leq e_i^r + \overline{e}_{N_i^r} \frac{|N_i^r|}{2\Delta} - e_i^r \frac{|N_i^r|}{2\Delta} = e_i^r + (\overline{e}_{N_i^r} - e_i^r) \frac{|N_i^r|}{2\Delta}.$$

If $\overline{e}_{N_i^r} \leq e_i^r$, then $e_i^{r+1} \leq e_i^r$. 
That is, $i$'s energy does not increase from round $r$ to round $r+1$.
If on the other hand it is $\overline{e}_{N_i^r} > e_i^r$, we have
$$e_i^{r+1} \leq e_i^r + (\overline{e}_{N_i^r} - e_i^r)/2 = (e_i^r + \overline{e}_{N_i^r})/2.$$

That is, the energy of $i$ in round $r+1$ is at most the average between the energy of $i$ in round $r$ and the average of $i$'s neighbors' energy in round $r$. 

Now consider the evolution of the protocol along many rounds.
We ignore the rounds when $\overline{e}_{N_i^r} \leq e_i^r$ since they do not increase the energy. 
For the other rounds, given that all nodes start with energy $1$, and that the average of some numbers cannot be bigger than the maximum, the energy at any given node cannot get bigger than $1$. Hence, the claim follows. 
\end{proof}

\begin{lemma}
\label{lemma:broadcast}
For any network of $n$ nodes, under the communication and connectivity models defined, the following holds.
If a message $m$ is held by all nodes in a set $V_1\subseteq V$, after $|V|-|V_1|$ rounds when every node holding the message broadcasts $m$ in each round, all nodes in $V$ hold the message.
\end{lemma}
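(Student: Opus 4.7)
\bigskip

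\noindent\textbf{Proof proposal.} The plan is a straightforward induction on rounds exploiting the $1$-interval connectivity of the dynamic graph. Let $V_1^t \subseteq V$ denote the set of nodes that hold the message $m$ at the start of round $t$, with $V_1^0 = V_1$. I will show that as long as $V_1^t \neq V$, we have $|V_1^{t+1}| \geq |V_1^t| + 1$, so that after at most $|V|-|V_1|$ rounds the whole set $V$ holds $m$.

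The key step is the following. Fix any round $t$ with $V_1^t \subsetneq V$. Since $G_t = (V, E(t))$ is connected by the definition of $1$-interval connectivity, the cut $(V_1^t,\, V\setminus V_1^t)$ cannot be empty: there exists at least one edge $(u,v) \in E(t)$ with $u \in V_1^t$ and $v \in V\setminus V_1^t$. By hypothesis every node holding $m$ broadcasts $m$ in round $t$, so $u$ broadcasts $m$; since $(u,v) \in E(t)$, node $v$ receives $m$ in round $t$ and therefore belongs to $V_1^{t+1}$. Hence $|V_1^{t+1}| \geq |V_1^t| + 1$ whenever $V_1^t \neq V$.

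A simple induction on $t$ then gives $|V_1^t| \geq \min\{|V|,\ |V_1| + t\}$. Setting $t = |V|-|V_1|$ yields $|V_1^{t}| = |V|$, which proves the claim.

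I do not expect any real obstacle here; the only subtle point is being explicit about the round-by-round monotonicity (once a node holds $m$ it keeps holding $m$, since nothing in the hypothesis removes the message), and about the fact that the adversary's choice of $E(t)$ must still yield a connected $G_t$, which is exactly what forces at least one new node per round to learn $m$ until saturation.
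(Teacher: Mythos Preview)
Your proof is correct and essentially identical to the paper's own argument: both use the cut between $V_1^t$ and its complement, invoke $1$-interval connectivity to guarantee at least one crossing edge, and conclude by induction that the informed set grows by at least one node per round until it equals $V$. The only cosmetic difference is that you state the induction invariant $|V_1^t|\geq\min\{|V|,|V_1|+t\}$ explicitly, whereas the paper leaves it implicit.
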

\begin{proof}
For any round $r>0$, consider the partition of nodes $\{V_1^r,V_2^r\}$ defined by the nodes holding the message at the beginning of round $r$. That is, $\forall i\in V_1^r$ the node $i$ holds $m$ and $\forall j\in V_2^r$ the node $j$ does not hold $m$. 
By 1-interval connectivity, there must exist a link $u,v$, such that $u\in V_1^r$ and $v\in V_2^r$. Given that all nodes holding the message broadcast $m$, $v$ must receive the message in round $r$. 
Thus, at the beginning of round $r+1$ it is $|V_1^{r+1}|\geq|V_1^r|+1$ and $|V_2^{r+1}|\leq|V_2^r|-1$. Applying the same argument inductively, after $|V_2^{r+1}|$ more rounds all nodes hold the message.
\end{proof}

The following lemma is a straightforward application of Lemma~\ref{lemma:broadcast} to the Notification Phase, where the message broadcasted is $\langle Halt\rangle$ for the first time when $k=n$.

\begin{lemma}
\label{lemma:halt}
{\bf Correctness of the Notification Phase:}
For any network of $n$ nodes, including a leader $\ell$, running the Counting Protocol under the communication and connectivity models defined the following holds.
If at the end of the Verification Phase $IsCorrect=true$, then at the end of the Notification Phase all nodes stop the Counting Protocol holding the size $n$.
\end{lemma}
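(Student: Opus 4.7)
The plan is to apply Lemma~\ref{lemma:broadcast} to the Notification Phase, treating $\langle Halt\rangle$ as the message $m$ and the singleton $V_1=\{\ell\}$ as the initial seed set. The key observation to set up this application is that Algorithm~\ref{algo:no-leader} makes a non-leader node broadcast $\langle Halt\rangle$ in every subsequent round once its $halt$ flag has been set, and Algorithm~\ref{algo:leader} has the leader do the same as soon as it enters the Notification Phase with $IsCorrect=true$. Hence the set of nodes holding $\langle Halt\rangle$ is monotone non-decreasing throughout the phase, exactly matching the hypothesis of Lemma~\ref{lemma:broadcast}.

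Before invoking the lemma I would also appeal to the correctness of the Collection and Verification Phases, to be established elsewhere in the analysis, in order to argue that if $IsCorrect=true$ at the end of a Verification Phase then necessarily $k=n$. Any earlier iteration with $k<n$ would either leave $e_\ell$ outside the interval $[k-1-1/k^c,\,k-1]$ or let some node with residual energy exceeding $1/k^c$ be detected, in both cases setting $IsCorrect$ to $false$ (lines~\ref{false1} or~\ref{false2} of Algorithm~\ref{algo:leader}).

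With these two pieces in place, Lemma~\ref{lemma:broadcast} gives that after $|V|-|V_1|=n-1$ rounds all $n$ nodes hold $\langle Halt\rangle$. Since the Notification Phase lasts $k=n \geq n-1$ rounds, every node has $halt=true$ by the end of the phase, exits the outer while-loop, and outputs $k=n$, as required. The only mildly delicate point is the one-round delay between a node receiving $\langle Halt\rangle$ and its first rebroadcast; this matches the ``holders broadcast every round'' behavior used in the proof of Lemma~\ref{lemma:broadcast}, so I do not expect a genuine obstacle in turning this plan into a formal proof, only the bookkeeping needed to state the correspondence cleanly.
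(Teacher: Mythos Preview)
Your proposal is correct and matches the paper's approach: the paper's own proof is the one-sentence remark preceding the lemma, namely that Lemma~\ref{lemma:halt} is a straightforward application of Lemma~\ref{lemma:broadcast} with $m=\langle Halt\rangle$ and the first broadcast occurring when $k=n$. Your write-up merely expands this remark, including the forward reference to the Verification-Phase analysis for $k=n$ that the paper leaves implicit; just be aware that Lemma~\ref{lemma:correctness} in turn cites Lemma~\ref{lemma:halt}, so when formalizing you should phrase the broadcast step so it stands on its own for the iteration $k=n$ and avoid a circular dependency.
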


\begin{lemma}
\label{lemma:correctness}
{\bf Correctness of the Verification Phase:}
For any network of $n>3$ nodes, including a leader $\ell$, running the Counting Protocol under the communication and connectivity models defined the following holds.
For any estimate of the size of the network $k$ and constant $c>1$, at the end of the Verification Phase $IsCorrect=true$ if and only if $k=n$.
\end{lemma}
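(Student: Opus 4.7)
The plan is to prove the two directions of the biconditional separately, using two structural facts: during the Verification Phase the leader's energy $e_\ell$ and every non-leader's residual energy $e_i$ remain frozen while only the auxiliary $e_{max}$ values evolve; and the total system energy is conserved, so $e_\ell + \sum_{i\neq\ell} e_i = n-1$ at all times.

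For the forward direction $k=n \Rightarrow IsCorrect = true$, I would invoke the defining property of $\tau(k)$: when $k=n$, by the end of the Collection Phase $e_\ell$ lies in $[k-1-1/k^c,\,k-1]$ and every non-leader has energy at most $1/k^c$. In the Verification Phase $e_\ell$ stays in range and each broadcast $e_{max}$ value is by construction a maximum of some subset of frozen non-leader energies, hence at most $1/k^c$. Therefore neither branch of the conditional in Algorithm~\ref{algo:leader} sets $IsCorrect\leftarrow false$.

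For the converse, I would argue the contrapositive: if $k\neq n$ then $IsCorrect$ is forced to $false$. The case $k>n$ is immediate since $k\geq n+1$ implies $k-1-1/k^c\geq n-1/k^c>n-1\geq e_\ell$, so the range test fails in every round. The substantive case is $k<n$. If $e_\ell\notin[k-1-1/k^c,\,k-1]$, again the range test fails. Otherwise the energy left in non-leader nodes is at least $n-k\geq 1$. Combining this with Lemma~\ref{lemma:nodeenergy} (every non-leader has energy at most $1$), a short counting argument shows that the number of non-leaders with energy at most $1/k^c$ is at most $(k-1)/(1-1/k^c)$, so the set $H_0$ of non-leaders with $e>1/k^c$ satisfies $|H_0|\geq n-1-(k-1)/(1-1/k^c)$, which is positive for $c>1$ and $2\leq k\leq n-1$.

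The remaining step is to show that a value greater than $1/k^c$ reaches the leader within the $1+\lceil k/(1-1/k^c)\rceil$ rounds of the Verification Phase. Let $H_r$ denote the set of non-leaders whose $e_{max}$ exceeds $1/k^c$ at the beginning of round $r$; since $e_{max}$ is monotone, $H_r$ is non-decreasing. In any round $r$ either the leader has a neighbor in $H_{r-1}$ and receives some $e_j>1/k^c$ (setting $IsCorrect$ to $false$), or, by 1-interval connectivity applied to the round-$r$ graph, there is an edge between $H_{r-1}$ and some non-leader outside it, giving $|H_r|\geq|H_{r-1}|+1$. The second case can occur at most $(n-1)-|H_0|\leq (k-1)/(1-1/k^c)$ times, after which every non-leader lies in $H_r$ and the leader's guaranteed round-$r$ neighbor must belong to $H_r$, triggering detection in one further round. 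A final arithmetic check that $(k-1)/(1-1/k^c)+1\leq 1+\lceil k/(1-1/k^c)\rceil$ completes the argument. I expect the main obstacle to be keeping the pigeonhole bound and the spreading argument precisely aligned with the asymmetric conditional used by the leader, and verifying that the constants work uniformly in $k$ for every $c>1$.
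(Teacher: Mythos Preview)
Your proposal is correct and follows essentially the same approach as the paper: energy conservation plus the defining property of $\tau(k)$ for the forward direction, and for $k<n$ a pigeonhole bound (using Lemma~\ref{lemma:nodeenergy}) on how many non-leaders can have energy $\leq 1/k^c$, followed by a 1-interval-connectivity spreading argument that matches the paper's use of Lemma~\ref{lemma:broadcast}. The only notable difference is your treatment of $k>n$: you dispose of it directly via the range test and conservation ($e_\ell\leq n-1<k-1-1/k^c$), whereas the paper argues that $k>n$ never arises because the protocol increments $k$ one by one and halts at $k=n$; your version is slightly more self-contained as a proof of the lemma as stated.
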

\begin{proof}
We start observing that, for each estimate $k$, each non-leader node is initialized with one unit of energy (Line~\ref{initone} in Algorithm~\ref{algo:no-leader}) and the leader's energy is initialized to $0$ (Line~\ref{initzero} in Algorithm~\ref{algo:leader}). Until a new iteration of the outer loop (in both algorithms) is executed, no energy is lost or gained by the system as a whole. Hence, the system energy is always $n-1$.

We prove first that, if $k=n$, at the end of the Verification Phase it is $IsCorrect=true$.
Given that $k=n$, the system energy is $k-1$ and therefore $e_\ell\leq k-1$. Also because $k=n$, we know that after the Collection Phase it is $e_\ell\geq k-1-1/k^c$ by definition of $\tau(k)$.
Therefore, $IsCorrect$ is not set to false in Line~\ref{false2} of Algorithm~\ref{algo:leader}.
Also because $e_\ell\geq k-1-1/k^c$ at the end of the Collection Phase,
we know that the energy left at the beginning of the Verification Phase is 
$e_{V\setminus \{\ell\}} = k-1-e_\ell \leq 1/k^c$.
Therefore, no non-leader node could have more than that energy. 
That is, $\forall i\in V\setminus \{\ell\} : e_i\leq 1/k^c$. Thus, during the Verification Phase, the leader will not be able to detect a node with energy bigger than $1/k^c$.
Therefore, $IsCorrect$ is not set to false in Line~\ref{false1} of Algorithm~\ref{algo:leader} either.
There is no other line where $IsCorrect$ is set to false. Hence, at the end of the Verification Phase it is $IsCorrect=true$.

We prove now the other direction of the implication. That is, if at the end of the Verification Phase $IsCorrect=true$, then it is $k=n$.
For the sake of contradiction, assume that $IsCorrect=true$ but $k\neq n$.
Notice that $k$ cannot be larger than $n$, because the estimate is increased one by one, we already proved that if $k=n$ at the end of the Verification Phase it is $IsCorrect=true$, and Lemma~\ref{lemma:halt} shows that all nodes would have stopped running the protocol.
Thus, we are left with the case when $k<n$.

Notice that if $e_\ell > k-1$ the variable $IsCorrect$ is set to false in Line~\ref{false2} of Algorithm~\ref{algo:leader}.
Hence, it must be $e_\ell \leq k-1$ and,
given that the system energy is $n-1$, the energy left is $e_{V\setminus \{\ell\}}\geq n-k$.
This energy left is stored in the $n-1$ non-leader nodes. 
Hence, there must exist some node $j\in V\setminus\{\ell\}$ in the network such that $e_j\geq (n-k)/(n-1)$.
If $IsCorrect=true$ it means that the leader did not detect a node with energy bigger than $1/k^c$ in Line~\ref{detect} of Algorithm~\ref{algo:leader}. 
However, for any $2\leq k\leq n-1$, $n>3$, and $c>1$, it is $1/k^c<(n-k)/(n-1)$ which means that such node must exist. 

To see why the latter inequality is true, we verify that $k^c(n-k)-n+1>0$ as follows. With respect to $k$, this function has a maximum for $k=cn/(c+1)$. 
That is, for $2\leq k\leq n-1$ (recall that we are in the case $k<n$), the function has minima in $2$ and $n-1$. 
Then, it is enough to verify that $2^c(n-2)-n+1>0$, which is true for any $c>1$ and $n>3$, 
and that $(n-1)^c-n+1>0$, which is also true for any $c>1$ and $n\geq 2$.

Thus, to complete the proof, it is enough to show that $1+k^{c+1}/(k^c-1)$ rounds are enough to detect a node with energy bigger than $1/k^c$.
To do that, we upper bound the number of nodes in the network with energy at most $1/k^c$ as follows. We know that at any time when the leader has energy $e_\ell$, the energy left is $n-1-e_\ell$. Let $S\subseteq V$ be the set of nodes with energy at most $1/k^c$. Then, we have that $n-1-e_\ell=\sum_{j\in S}e_j+\sum_{k\in V\setminus S}e_k$. To maximize the size of $S$, we minimize the size of $V\setminus S$ assuming that all nodes in $V\setminus S$ have maximum energy, which according to Lemma~\ref{lemma:nodeenergy} is at most 1. Then, we have that $n-1-e_\ell=\sum_{j\in S}e_j+(n-|S|)$ which yields $|S|-1-e_\ell=\sum_{j\in S}e_j$ Given that $\sum_{j\in S}e_j\leq |S|/k^c$, we have that
$|S|\leq (1+e_\ell)/(1-1/k^c)$.
Recall that $e_\ell \leq k-1$ because $IsCorrect$ would have been set to false in Line~\ref{false2} of Algorithm~\ref{algo:leader} otherwise.
Replacing, we get $|S|\leq k^{c+1}/(k^c-1)$. 

Let $\{V_1,V_2\}$ be a partition of $V$ such that $V_2=S\cup\{\ell\}$. Recall that, for any $v\in V_1$ it is $e_v>1/k^c$. 
Using Lemma~\ref{lemma:broadcast}, we know that 
$|V_2|=1+k^{c+1}/(k^c-1)$ iterations in the Verification Phase of Algorithm~\ref{algo:leader} are enough for the leader to detect that there is a node with energy larger than $1/k^c$, which contradicts our assumption that $IsCorrect=true$. 
\end{proof}

The following theorem establishes our main result. 

\begin{theorem}
\label{thm:all}
For any anonymous dynamic network of $n>3$ nodes, including a leader $\ell$, and for any constant $c>\log 5$, the following holds.
If the adversarial topology is limited by a maximum degree $\Delta$ and the connectivity model defined, and nodes run the Counting Protocol in Algorithms~\ref{algo:leader} and~\ref{algo:no-leader} under the communication model defined, after $r$ rounds, all nodes stop holding the size of the network $n$, where
\begin{align*}
r 
&< n(n+3)+\ln n-4+\sum_{k=2}^n \tau(k).
\end{align*}
Where $\tau(k)$ is a function such that, if $k=n$ and the Collection Phase is executed for at least $\tau(k)$ rounds, then at the end of the phase the leader has energy $e_\ell\geq k-1-1/k^c$.
\end{theorem}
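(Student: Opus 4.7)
Correctness follows immediately from the previous lemmas: by Lemma~\ref{lemma:correctness} the Verification Phase sets $IsCorrect$ to false for every $k\in\{2,\dots,n-1\}$, so no $\langle Halt\rangle$ is broadcast in those iterations; for $k=n$ it leaves $IsCorrect$ true, and Lemma~\ref{lemma:halt} (which uses Lemma~\ref{lemma:broadcast}) then guarantees that the $k=n$ rounds of the Notification Phase suffice to propagate $\langle Halt\rangle$ to all $n-1$ non-leader nodes. Hence every node terminates at the end of iteration $k=n$ holding the value $n$, and it only remains to bound the total round count.

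The round count itself is a direct summation. With $V(k)=1+\lceil k/(1-1/k^c)\rceil$ the length of the Verification Phase, the total is
\[
r=\sum_{k=2}^n\bigl[\tau(k)+V(k)+k\bigr].
\]
Using the identity $k/(1-1/k^c)=k+k/(k^c-1)$ together with $\lceil x\rceil\le x+1$, I would bound $V(k)\le 2+k+k/(k^c-1)$. Collecting the elementary sums $\sum_{k=2}^n 1=n-1$ and $\sum_{k=2}^n k=n(n+1)/2-1$ then gives
\[
r\le\sum_{k=2}^n\tau(k)+n^2+3n-4+\sum_{k=2}^n\frac{k}{k^c-1}.
\]

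The real work, and the reason the hypothesis is $c>\log 5$, is to show $\sum_{k=2}^n k/(k^c-1)<\ln n$. The plan is to linearise the summand in two steps. For $k\ge 2$ one has $k^{-c}\le 2^{-c}<1/5$, so $k/(k^c-1)=k^{1-c}/(1-k^{-c})\le(5/4)\,k^{1-c}$. Next, $(5/4)\,k^{1-c}\le 1/k$ is equivalent to $k^{c-2}\ge 5/4$; at the critical index $k=2$ this reads $2^{c-2}\ge 5/4$, i.e.\ $c\ge\log 5$, and because $c>\log 5$ strictly the inequality is strict at $k=2$ and only improves for larger $k$ (since $c-2>0$). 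Telescoping yields $\sum_{k=2}^n k/(k^c-1)<\sum_{k=2}^n 1/k=H_n-1<\ln n$, the last step being the standard estimate $H_n<1+\ln n$ for $n\ge 2$. Substituting back gives $r<\sum_{k=2}^n\tau(k)+n^2+3n-4+\ln n=n(n+3)+\ln n-4+\sum_{k=2}^n\tau(k)$, exactly as claimed.

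Conceptually, the threshold $\log 5$ in the hypothesis is precisely calibrated: it is the smallest constant for which the $5/4$ factor needed to strip the denominator $1-k^{-c}$ is absorbed by comparison with the harmonic term $1/k$ at the first index $k=2$. Any smaller $c$ would break the telescoping argument at that very first term, so identifying this tight calibration is the only non-routine step in the proof.
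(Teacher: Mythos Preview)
Your proof is correct and follows essentially the same route as the paper: correctness is delegated to Lemmas~\ref{lemma:correctness} and~\ref{lemma:halt}, the phase lengths are summed, and the key step is the bound $k/(k^c-1)<1/k$ for $k\ge 2$ and $c>\log 5$, followed by $\sum_{k=2}^n 1/k\le\ln n$. The only difference is cosmetic: the paper asserts $k/(k^c-1)<1/k$ directly (equivalent to $k^c>k^2+1$, which at $k=2$ reads $2^c>5$), whereas you reach the same inequality via the intermediate bound $1/(1-k^{-c})\le 5/4$; your word ``telescoping'' is a slight misnomer for what is simply termwise comparison with the harmonic series, but the argument is sound.
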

\begin{proof}
Correctness is a direct consequence of Lemmas~\ref{lemma:halt} and~\ref{lemma:correctness}. 
The running time is obtained adding the number of iterations of each phase, as follows.
\begin{align*}
r 
&= \sum_{k=2}^n \left(\tau(k)+\left\lceil1+\frac{k}{1-1/k^c}\right\rceil+k\right)\\
&\leq \sum_{k=2}^n \left(\tau(k)+2+\frac{k}{1-1/k^c}+k\right)\\
&= n(n+3)-4+\sum_{k=2}^n \left(\tau(k)+\frac{k}{k^c-1}\right).
\end{align*}

Using that $k/(k^c-1)<1/k$ for any $c>\log 5$ and $k\geq 2$, we obtain the following. 
\begin{align*}
r 
&< n(n+3)-4+\sum_{k=2}^n \left(\tau(k)+\frac{1}{k}\right)\\
&\leq n(n+3)+\ln n-4+\sum_{k=2}^n \tau(k).
\end{align*}
\end{proof}

Bounding the running time of the Collection Phase using Lemma 2 in~\cite{conscious} in Theorem~\ref{thm:all}, the following corollary is obtained.

\begin{corollary}
\label{cor:exp}
For any anonymous dynamic network of $n>6$ nodes, including a leader $\ell$, the following holds.
If the adversarial topology is limited by a maximum degree $1\leq\Delta\leq n-1$ and the connectivity model defined, and nodes run the Counting Protocol in Algorithms~\ref{algo:leader} and~\ref{algo:no-leader} under the communication model defined, after $r$ rounds, all nodes stop holding the size of the network $n$, where
\begin{align*}
r 
&< \frac{(2\Delta)^{n+1}(n+1)\ln (n+1)}{\ln(2\Delta)}.
\end{align*}
\end{corollary}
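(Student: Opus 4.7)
The plan is to substitute the explicit upper bound on $\tau(k)$ given by Lemma~2 of~\cite{conscious} into the running-time expression of Theorem~\ref{thm:all}, and then to collapse the resulting geometric-like sum. Lemma~2 of~\cite{conscious} bounds the number of rounds the energy-transfer/gossip step needs to drive the leader's accumulated energy within $1/k^{c}$ of $k-1$, and this bound has the form of $(2\Delta)^{k}$ multiplied by a polynomial factor in $k$ (depending on the chosen constant~$c$). First I would pick $c$ to be a fixed constant slightly larger than $\log 5$, so that both the hypothesis of Theorem~\ref{thm:all} is met and the polynomial factor in $\tau(k)$ becomes a function of $k$ alone.

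Second, I would estimate $\sum_{k=2}^{n}\tau(k)$. Because $\tau(k)$ grows geometrically in $k$ with common ratio $2\Delta\geq 2$, the sum is dominated up to a constant factor by its last term, so
\[
\sum_{k=2}^{n}\tau(k)\;\leq\;\tau(n)\cdot\frac{2\Delta}{2\Delta-1}\;\leq\;P(n)\cdot\frac{(2\Delta)^{n+1}}{2\Delta-1},
\]
where $P(n)$ is the polynomial factor inherited from Lemma~2 of~\cite{conscious}. I would then apply the elementary inequality $2\Delta-1\geq\ln(2\Delta)$ valid for $\Delta\geq 1$ to obtain
\[
\sum_{k=2}^{n}\tau(k)\;\leq\;P(n)\cdot\frac{(2\Delta)^{n+1}}{\ln(2\Delta)}.
\]

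Third, I would bring in the polynomial remainder $n(n+3)+\ln n-4$ from Theorem~\ref{thm:all}. Since $(2\Delta)^{n+1}/\ln(2\Delta)\geq 2^{n+1}/\ln 2$, which grows exponentially in $n$, the additive quadratic term is absorbed for $n>6$ into the exponential factor, and the polynomial prefactor $P(n)$ can be checked to be at most $(n+1)\ln(n+1)$ for $n>6$, yielding the claimed bound. The main (entirely routine) obstacle is precisely this last bookkeeping: matching the polynomial factor produced by Lemma~2 of~\cite{conscious}, together with the absorbed additive terms, against the clean form $(n+1)\ln(n+1)$ asserted in the statement, which amounts to choosing $c$ small enough (just above $\log 5$) and verifying the inequality for all $n>6$ directly. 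No new conceptual argument is needed beyond Theorem~\ref{thm:all} and Lemma~2 of~\cite{conscious}.
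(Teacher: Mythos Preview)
Your plan is correct and follows the paper's strategy at the top level: derive an explicit $\tau(k)$ from Lemma~2 of~\cite{conscious} and feed it into Theorem~\ref{thm:all}. The paper carries out the first step exactly as you anticipate, obtaining $\tau(k)=k\lceil(2\Delta)^{k}\ln k\rceil$ (so your ``polynomial prefactor'' is $P(k)=k\ln k$, and the choice of $c$ is in fact irrelevant for this bound, since $\ln\!\bigl(k/(1+1/k^{c})\bigr)<\ln k$ for every $c>0$).

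Where you diverge from the paper is in bounding $\sum_{k=2}^{n}k(2\Delta)^{k}\ln k$. The paper replaces the sum by the integral $\int_{2}^{n+1}k(2\Delta)^{k}\ln k\,\mathrm{d}k$, evaluates it in closed form via the exponential integral $\mathrm{Ei}$, and then uses $\mathrm{Ei}(\ln x)=\mathrm{li}(x)<x$ together with the inequality $n(3n+7)/2+\ln n<(2\Delta)^{n+1}\ln(n+1)/\ln^{2}(2\Delta)$ (for $n>6$, $1\le\Delta\le n-1$) to absorb the additive polynomial terms. Your geometric-series argument is more elementary and equally valid: since $k(2\Delta)^{k}\ln k$ grows by a factor at least $2\Delta$ at each step, the sum is at most $n\ln n\cdot(2\Delta)^{n}\cdot\frac{2\Delta}{2\Delta-1}$, and $2\Delta-1\ge\ln(2\Delta)$ converts this into the desired denominator. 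The slack $(n+1)\ln(n+1)-n\ln n>\ln(n+1)$ then multiplies $(2\Delta)^{n+1}/\ln(2\Delta)\ge 2^{n+1}/\ln 2$, which indeed swallows both the quadratic remainder from Theorem~\ref{thm:all} and the $O(n^{2})$ coming from the ceilings, for all $n>6$. Your route avoids the $\mathrm{Ei}/\mathrm{li}$ machinery entirely; the paper's route gives a slightly sharper intermediate expression but needs the special-function bound. Either works.
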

\begin{proof}
Lemma 2 in~\cite{conscious} proves that, for any estimate $k\geq n$ and integer $\rho>0$, starting with $e_\ell=0$ and $e_i=1$ for all $i\in V\setminus \{\ell\}$, after running $\rho k$ rounds of the energy transfer protocol the energy stored in the leader is $e_\ell \geq n(1-(((2\Delta)^{k}-1)/(2\Delta)^{k})^\rho)$. 
Notice in Theorem~\ref{thm:all} that the condition $e_\ell\geq k-1-1/k^c$ only applies when $k=n$.
Thus, it is enough to find $\rho$ such that 
\begin{align*}
k\left(1-\left(\frac{(2\Delta)^{k}-1}{(2\Delta)^{k}}\right)^{\rho}\right) &\geq k-1-1/k^c\\
\rho &\geq \frac{\ln (k/(1+1/k^c))}{\ln\left(1/(1-1/(2\Delta)^{k})\right)}.
\end{align*}

Using that $1-x\leq e^{-x}$ for $x\leq 1$, it is enough to have $\rho = \lceil(2\Delta)^k \ln k\rceil$. Replacing in Theorem~\ref{thm:all}, we obtain

\begin{align*}
r 
&< n(n+3)+\ln n-4+\sum_{k=2}^n k\lceil(2\Delta)^k \ln k\rceil\\
&\leq n(n+3)+\ln n-4+\sum_{k=2}^n k(1+(2\Delta)^k \ln k)\\
&= n(3n+7)/2+\ln n-5+\sum_{k=2}^n k(2\Delta)^k \ln k.
\end{align*}
 
Bounding with the integral,
\begin{align*}
r &< n(3n+7)/2+\ln n-5+\int_{k=2}^{n+1} k(2\Delta)^k \ln k~\mathrm{d} k\\
&= n(3n+7)/2+\ln n-5+\frac{(2\Delta)^k((k\ln(2\Delta)-1)\ln k-1)+\textrm{Ei}(k\ln(2\Delta))}{\ln^2(2\Delta)}\bigg|_2^{n+1}\\
&\leq n(3n+7)/2+\ln n+\frac{(2\Delta)^{n+1}(((n+1)\ln(2\Delta)-1)\ln (n+1)-1)+\textrm{Ei}((n+1)\ln(2\Delta))}{\ln^2(2\Delta)}.
\end{align*}


Using that $\textrm{Ei}(\ln x)=\textrm{li}(x)< x$, for any real number $x\neq 1$,
it is $\textrm{Ei}((n+1)\ln (2\Delta))<(2\Delta)^{n+1}$. Replacing,

\begin{align*}
r &< n(3n+7)/2+\ln n+\frac{(2\Delta)^{n+1}((n+1)\ln(2\Delta)-1)\ln (n+1)}{\ln^2(2\Delta)}\\
&= n(3n+7)/2+\ln n
+\frac{(2\Delta)^{n+1}(n+1)\ln (n+1)}{\ln(2\Delta)}
-\frac{(2\Delta)^{n+1}\ln (n+1)}{\ln^2(2\Delta)}.
\end{align*}
Using that $n(3n+7)/2+\ln n<(2\Delta)^{n+1}\ln (n+1)/\ln^2(2\Delta)$ for any $n>6$ and $1\leq\Delta\leq n-1$, the claim follows.
%
\end{proof}

\subsection{Discussion}
\label{sec:discussion}
In this paper we have studied the problem of Counting in \ADNs. The problem 
is challenging because the lack of identifiers and changing topology make difficult to decide if a new message has been received before from the same node. Also, the obvious lack of knowledge of the network size makes difficult to decide when the algorithm has to stop.

Assuming an upper bound on the size of the system facilitates termination but may lead to very bad time complexity if the upper bound is a huge overestimate. According to our knowledge, the algorithm in ~\cite{spirakis} is the only one to compute an upper bound of the system size for \ADNs and in the worst case it  is exponential, i.e. $O(\Delta^n)$ where $n$ is the size of the system and $\Delta$ is an upper bound on the nodes' degree. Finding the termination condition when an upper bound on the network size is not available is more challenging, but it is expected to provide more efficient algorithms. Our algorithm does not assume such upper bound, and computes the exact size of the system applying a bottom-up approach where the size is possibily underestimated several times.

It is known that if no restriction on the size of the messages is required, the counting problem can be easily solved in $O(n)$ time when nodes have IDs ~\cite{KuhnLO2010}). In this paper we have made a significant step towards understanding if a linear 
counting algorithm exists also when IDs are not available, by identifying the speedup bottleneck and reducing exponentially the best known upper bound.
This will help to understand the difficulty introduced by anonymity (if any). Despite our contribution, there is still a big gap 
with respect to the linear lower bound trivially given by the dynamic diameter.

Finally, although we focus on communication networks, our results carry over into any distributed system of similar characteristics. 

%
%
%
\section{Acknowledgements}
\label{section:ack}
We thank Arnaud Casteigts for introducing the model to us, and Antonio Fern\'andez Anta for useful discussions.
This study has been carried out with financial support from the French State, managed by the French National Research Agency (ANR) in the frame of the ``Investments for the future'' Programme IdEx Bordeaux - CPU (ANR-10-IDEX-03-02), 
from the National Science Foundation (CCF-1114930), 
and from Kean University UFRI Grant.


\bibliographystyle{abbrv}
\bibliography{./Comprehensive_2010}

\end{document}